\theoremstyle{plain}
\newtheorem{theorem}{\sc Theorem}[section]
\newtheorem{corollary}[theorem]{\sc Corollary}
\newtheorem{example}{Example}[section]
\newtheorem{examples}{Example}[subsection]
\newtheorem{remark}{Remark}[section]
\theoremstyle{definition}
\newtheorem{definition}{Definition}[section]
\numberwithin{equation}{section} 
\numberwithin{equation}{section}
\theoremstyle{remark}
\newcommand{\pmeas}[2]{\xi_{#2}^{(#1)}}
\newcommand{\pmp}{\pmeas{d}{E'(q)}}
\newcommand{\Pmeas}[3]{\theta_{#3}^{(#2,#1)}}
\newcommand{\Pmp}{\Pmeas{d}{n}{E'(q)}}
\newcommand{\pf}[2]{Z_{#2}^{(#1)}}
\newcommand{\pfp}{\pf{d}{E'(q)}}
\newcommand{\ip}[1]{\left\langle #1 \right\rangle}
\DeclareMathOperator{\cyc}{cyc}
\DeclareMathOperator{\aut}{aut}
\def\ra{{\rightarrow}}
\definecolor{my-blue}{rgb}{0.0,0.0,0.6}
\definecolor{my-red}{rgb}{0.5,0.0,0.0}
\definecolor{my-green}{rgb}{0.0,0.5,0.0}
\definecolor{nicos-red}{rgb}{0.75,0.0,0.0}
\definecolor{light-gray}{gray}{0.6}
\definecolor{really-light-gray}{gray}{0.8}
\def\be{\begin{equation}}
\def\ee{\end{equation}}
\def\bea{\begin{eqnarray}}
\def\eea{\end{eqnarray}}
\def\bt{\begin{theorem}}
\def\et{\end{theorem}}
\def\bex{\begin{example}\small \rm}
\def\eex{\end{example}}
\def\bexs{\begin{examples}\small \rm}
\def\eexs{\end{examples}}
\def\ra{\rightarrow}
\def\deq{\coloneqq}
\def\br{\begin{remark}\small \rm}
\def\er{\end{remark}}
\def\&{&{\hskip -20pt}}
\def\Ib{\mathbf{I}}
\def\Pb{\mathbf{P}}
\def\Zbb{\mathbb{Z}}
\def\Zbb{\mathbb{Z}}
\let\Oldsection\section
\renewcommand{\section}{\FloatBarrier\Oldsection}
\let\Oldsubsection\subsection
\renewcommand{\subsection}{\FloatBarrier\Oldsubsection}
\let\Oldsubsubsection\subsubsection
\renewcommand{\subsubsection}{\FloatBarrier\Oldsubsubsection}
\def\mP{\mathcal{P}}
\newcommand{\rb}[1]{\left(#1\right)}
\newcommand{\abs}[1]{\left|#1\right|}
\newcommand{\set}[1]{\left\{#1\right\}}
\newcommand{\lambdamap}{\Lambda^{(n)}_d}
\newcommand{\rr}{\mathbb{R}}
\definecolor{darkgreen}{rgb}{0.0,0.5,0.0}
\definecolor{darkblue}{rgb}{0.0,0.0,0.3}
\definecolor{nicosred}{rgb}{0.65,0.1,0.1}
\definecolor{light-gray}{gray}{0.7}
\newcommand{\fM}{\mathfrak{M}}
\begin{document}
\baselineskip 16pt
\begin{flushright}
CRM 3357 (2016)
\end{flushright}
\medskip
\begin{center}
\begin{Large}\fontfamily{cmss}
\fontsize{17pt}{27pt}
\selectfont
\textbf{Zero-temperature limit of \\ quantum weighted Hurwitz numbers}
\end{Large}\\
\bigskip
\begin{large}  {J. Harnad}$^{1,2}$ and {Janosch Ortmann}$^{1,2}$ 
 \end{large}
\\
\bigskip
\begin{small}
$^{1}${\em Centre de recherches math\'ematiques,
Universit\'e de Montr\'eal\\ C.~P.~6128, succ. centre ville, Montr\'eal,
QC, Canada H3C 3J7 } \\
\smallskip
$^{2}${\em Department of Mathematics and
Statistics, Concordia University\\ 1455 de Maisonneuve Blvd.~W.  
Montr\'eal, QC,  Canada H3G 1M8 } 
\end{small}
\end{center}
\bigskip

\begin{abstract}
   The partition function  for quantum weighted double Hurwitz numbers can be interpreted in 
   terms of the energy distribution of a quantum Bose gas with vanishing fugacity.
We compute the leading term of the partition function  and the quantum weighted 
   Hurwitz numbers  in the zero temperature limit $T \ra 0$,  as well as  the next order corrections. 
   The leading term  is shown to reproduce the case of uniformly weighted Hurwitz numbers of Belyi curves. 
   In particular, the KP or Toda $\tau$-function serving  as generating function for the quantum Hurwitz
numbers is shown in the limit to give the one for Belyi curves  and, with
suitable scaling,  the same holds true for the partition function, the weights and the expectations of Hurwitz numbers.

     \end{abstract}



\section{Introduction: quantum weighted Hurwitz numbers and their generating functions}
\label{sec:intro}

\subsection{Hurwitz numbers}
\label{subsev:hurwitz}

Multiparametric weighted  Hurwitz numbers  were  introduced in \cite{GH1, GH2, H1, HO} as a generalization
of the notion of simple Hurwitz numbers \cite{Hu1, Hu2, Pa, Ok} and  other previously studied special cases 
 \cite{GGN, AC1, AC2, AMMN1, AMMN2, KZ, Z}. It was shown  that parametric families of
 KP or $2D$ Toda $\tau$-functions of {\it hypergeometric type} \cite{KMMM, OrSc} serve as generating 
 functions for all such weighted Hurwitz numbers, with the latter appearing as coefficients  in a single or double expansion
 over the basis of power sum symmetric functions in an auxiliary set of variables.
  The weights are determined by a weight  generating functions $G(z)$, which
  depends  on a possibly infinite set of parameters ${\bf c}=(c_1, c_2, \dots)$.
This can either be expressed as a formal  sum
 \be
 G(z) = 1 + \sum_{i=1}^\infty g_i z^i
  \label{G_weight_gen_sum}
 \ee
 or an infinite product
 \be
 G(z) = \prod_{i=1}^\infty (1 +z c_i),
 \label{G_weight_gen_prod}
 \ee
 or some limit thereof. Comparing the two, $G(z)$ can be viewed as the generating
 function for elementary symmetric functions in the variables ${\bf c} =(c_1, c_2, \dots)$.
 \be
 g_i = e_i({\bf c}).
 \ee
 
For a set of $k$ partitions $(\mu^{(1)}, \cdots , \mu^{(k)})$ of $n$,  the geometrical definition of  the pure Hurwitz number  $H(\mu^{(1)}, \cdots , \mu^{(k)})$  is:
 \begin{definition}
 $H(\mu^{(1)}, \cdots , \mu^{(k)})$
 is the number of distinct $n$-sheeted branched coverings $\Gamma \ra \Pb^1$ of the Riemann sphere
 having $k$ branch points $(p_1, \dots , p_k)$ with ramification profiles $\{\mu^{(i)}\}_{i=1, \dots , k}$,
 divided by the order $\aut(\Gamma)$ of the automorphism group of $\Gamma$.
 \end{definition}
An  equivalent combinatorial definition  \cite{Hu1, Hu2, Frob1, Frob2, LZ} is:
 \begin{definition}  $H(\mu^{(1)}, \cdots , \mu^{(k)})$ is the number of distinct factorization
 of the identity element $\Ib \in S_n$ of the symmetric group as an ordered product 
 \be
 \Ib = h_1, \dots h_k, \quad h_i \in S_n, \quad i=1, \dots , k
\ee
where $h_i$ belongs to the conjugacy class $\cyc(\\mu^{(i)}$ with cycle lengths equal to the parts 
of $\mu^{(i)}$, divided by $n!$.
\end{definition}

We denote the   weight of a partition $| \mu|$ , its length $\ell(\mu)$ and define its {\it colength} as
\be
\ell^*(\mu):= |\mu| - \ell(\mu).
\ee

\subsection{Weighted Hurwitz numbers}
\label{subsec:weighted}

As in \cite{GH1, GH2, H1, HO}  for each positive integer $d$, 
and every pair of ramification profiles $(\mu, \nu)$ (i.e. partitions of $n$),
we define the weighted double Hurwitz number $H^d_G(\mu, \nu)$ as
   \be
H^d_G(\mu, \nu) := \sum_{k=0}^\infty \sideset{}{'}\sum_{\substack{\mu^{(1)}, \dots \mu^{(k)} \\ \sum_{i=1}^k \ell^*(\mu^{(i)})= d}}
m_\lambda ({\bf c})H(\mu^{(1)}, \dots, \mu^{(k)}, \mu, \nu) ,
\label{Hd_G}
\ee
where 
\be
m_\lambda ({\bf c}) :=
\frac{1}{\abs{\aut(\lambda)}} \sum_{\sigma\in S_k} \sum_{1 \le i_1 < \cdots < i_k}
 c_{i_\sigma(1)}^{\lambda_1} \cdots c_{i_\sigma(k)}^{\lambda_k},
 \label{monomial_sf}
\ee
is the monomial sum symmetric  function \cite{Mac} corresponding to a partition $\lambda$ of weight 
\be
|\lambda|= d = \sum_{i=1}^k \ell^*(\mu^{(i)})
\label{d_def}
\ee
whose parts $\{\lambda_i\}$ are the colengths $\{\ell^*(\mu^{(i)})\}$ in weakly descending order,
\be
|\aut (\lambda)| := \prod_{i=1}^{\ell(\lambda)} m_i(\lambda)!,
\ee
where $m_i(\lambda)$ is the number of parts of $\lambda$ equal to $i$
and $\sum'$ denotes the sum over all $k$-tuples of partitions $(\mu^{(1)}, \dots, \mu^{(k)})$ 
 satisfying condition (\ref{d_def}) other than the cycle type of the identity element.

The Euler characteristic of the covering surface is given by the Riemann-Hurwitz formula, 
\be
\chi = 2-2g = 2n - d.
\ee

The particular case where all the $\mu^{(i)}$'s represent simple branching was
 studied from this viewpoint  in \cite{Pa, Ok}. It corresponds to the exponential weight generating function
\be
G(z) = e^z.
\ee
The weight is uniform in this case  on all $k$-tuples $(\mu^{(1)}, \cdots, \mu^{(k)})$ of  partitions corresponding 
to simple branching
\be
\mu^{(i)} = (2, (1)^{n-2})
\ee
and vanishes on all others.  This is what is referred to as ``simple''  (double) Hurwitz numbers.

\subsection{The $\tau$-function as generating function}
\label{subsec:taufn}

Choosing a small parameter $\beta$, the following double Schur function  expansion defines 
the $N=0$ value of a $2D$-Toda $\tau $ function of hypergeometric type (at the lattice point $N=0$).
\be
   \tau^{(G, \beta)} ({\bf t}, {\bf s})  = \sum_{\lambda} \ r^{(G, \beta)}_\lambda s_\lambda ({\bf t}) s_\lambda ({\bf s}),
   \label{tau_G}
  \ee
  where the coefficients $r^{(G, \beta)}_\lambda$  are determined by the weight generating
  function $G$ by the following {\it content product} formula
  \be
r_\lambda^{(G, \beta)} :=   \prod_{(i,j)\in \lambda} G(\beta( j-i)),
\label{r_G_lambda}
\ee

By changing the expansion basis from Schur functions  \cite{Mac} to the power sum symmetric 
functions $p_\mu({\bf t}) p_\nu({\bf s})$ it  was proved in \cite{GH1, GH2, H1, HO}, 
that $ \tau^{(G, \beta)} ({\bf t}, {\bf s})) $  serves as a generating function for the weighted double Hurwitz numbers $H^d_G(\mu, \nu)$.
\begin{theorem}
\label{tau_H_G_generating function}
The 2D Toda $\tau$-function $\tau^{(G, \beta)}({\bf t}, {\bf s})$
can be expressed as
\be
\tau^{(G, \beta)}({\bf t}, {\bf s}) =\sum_{d=0}^\infty \beta^d \sum_{\substack{\mu, \nu \\ |\mu|=|\nu|}} H^d_G(\mu, \nu) p_\mu({\bf t}) p_\nu({\bf s}).
\label{tau_H_G}
\ee
\end{theorem}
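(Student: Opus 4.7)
The plan is to expand both sides in the power-sum basis $\{p_\mu({\bf t})p_\nu({\bf s})\}$ and match coefficients. Applying the Frobenius character formula $s_\lambda = \sum_\mu z_\mu^{-1}\chi_\lambda(\mu)p_\mu$ to each Schur factor in (\ref{tau_G}) and interchanging summations yields
\begin{equation*}
\tau^{(G,\beta)}({\bf t},{\bf s}) = \sum_{\mu,\nu}\frac{p_\mu({\bf t})\,p_\nu({\bf s})}{z_\mu z_\nu}\sum_\lambda r_\lambda^{(G,\beta)}\chi_\lambda(\mu)\chi_\lambda(\nu),
\end{equation*}
reducing the theorem to proving, for each $d\geq 0$ and each pair of partitions $\mu,\nu$ of $n$, the scalar identity
\begin{equation*}
H^d_G(\mu,\nu) = \frac{1}{z_\mu z_\nu}\sum_\lambda F_d(\lambda)\,\chi_\lambda(\mu)\chi_\lambda(\nu),
\end{equation*}
where $F_d(\lambda)$ denotes the coefficient of $\beta^d$ in $r_\lambda^{(G,\beta)}$.

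To connect $r_\lambda^{(G,\beta)}$ with $\mathbb{C}[S_n]$, I use the Jucys--Murphy elements $J_a = \sum_{b<a}(b,a)$. By the classical content-eigenvalue theorem, any symmetric polynomial in the $J_a$ is central, and its scalar action on the Specht module $V^\lambda$ equals the same polynomial evaluated at the contents $\{j-i:(i,j)\in\lambda\}$. In particular, $\prod_{a=1}^n G(\beta J_a)$ acts on $V^\lambda$ by $r_\lambda^{(G,\beta)}$. Writing $G(z)=\prod_l(1+zc_l)$, expanding, and grouping the products $\prod_l c_l^{k_l}$ by the multiset of positive exponents yields
\begin{equation*}
\prod_{a=1}^n G(\beta J_a) = \sum_\lambda \beta^{|\lambda|}\,m_\lambda({\bf c})\,\prod_{j=1}^{\ell(\lambda)} e_{\lambda_j}(J_1,\ldots,J_n).
\end{equation*}
Jucys's classical identity $e_k(J_1,\ldots,J_n) = \sum_{|\mu|=n,\,\ell^*(\mu)=k}K_\mu$, where $K_\mu := \sum_{\sigma\in\cyc(\mu)}\sigma$ is the cycle sum, then converts this into
\begin{equation*}
\prod_{a=1}^n G(\beta J_a) = \sum_{d\geq 0}\beta^d A_d,\qquad A_d := \sideset{}{'}\sum_{\substack{k,\,(\mu^{(1)},\ldots,\mu^{(k)})\\ \sum_i\ell^*(\mu^{(i)})=d}} m_\lambda({\bf c})\prod_i K_{\mu^{(i)}},
\end{equation*}
with $\lambda$ the weakly decreasing rearrangement of $(\ell^*(\mu^{(i)}))_i$; the restriction $\ell^*(\mu^{(i)})\geq 1$ matches the primed sum because no $e_{k}$ with $k=0$ contributes. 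Hence the central character of $A_d$ on $V^\lambda$ equals $F_d(\lambda)$.

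For the last step I use the combinatorial definition of Hurwitz numbers, which directly gives $H(\mu^{(1)},\ldots,\mu^{(k)},\mu,\nu) = \frac{1}{n!}[\Ib]\bigl(K_{\mu^{(1)}}\cdots K_{\mu^{(k)}} K_\mu K_\nu\bigr)$, where $[\Ib]$ extracts the coefficient of the identity permutation. Weighting by $m_\lambda({\bf c})$ and summing as in the definition of $H^d_G$ gives $H^d_G(\mu,\nu) = \frac{1}{n!}[\Ib](A_d K_\mu K_\nu)$. The standard identity $[\Ib]F = \frac{1}{n!}\sum_\lambda(\dim V^\lambda)^2\omega_\lambda(F)$ for central $F$, combined with $\omega_\lambda(K_\mu) = n!\chi_\lambda(\mu)/(z_\mu\dim V^\lambda)$ and the analogue for $K_\nu$, yields
\begin{equation*}
H^d_G(\mu,\nu) = \frac{1}{z_\mu z_\nu}\sum_\lambda \omega_\lambda(A_d)\,\chi_\lambda(\mu)\chi_\lambda(\nu),
\end{equation*}
which is the required identity since $\omega_\lambda(A_d)=F_d(\lambda)$.

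The main technical obstacle is the double invocation of Jucys--Murphy theory: one needs both the content-eigenvalue property (to realize $r_\lambda^{(G,\beta)}$ as a central character) and Jucys's elementary-symmetric-function identity (to expand that central element in the class-sum basis). Both are classical, but it is their combination that transports the content product on the character side into the class-algebra products driving the weighted Hurwitz enumeration. A secondary subtlety is the bookkeeping of the $|\aut(\lambda)|$ factor defining $m_\lambda({\bf c})$, which must cancel exactly the overcounting that arises when grouping exponent tuples $(k_l)$ by their sorted positive values.
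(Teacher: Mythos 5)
Your proof is correct. The paper itself does not prove Theorem \ref{tau_H_G_generating function}; it quotes it from the references [GH1, GH2, H1, HO], and your argument --- realizing $r_\lambda^{(G,\beta)}$ as the central character of $\prod_a G(\beta J_a)$ via the content-eigenvalue property of the Jucys--Murphy elements, expanding in class sums through $e_k(J_1,\dots,J_n)=\sum_{\ell^*(\mu)=k}K_\mu$, and evaluating $[\Ib](A_d K_\mu K_\nu)$ by central characters --- is essentially the standard proof given in those references, with all the bookkeeping (the $m_\lambda({\bf c})$ grouping, the exclusion of colength-zero profiles, the $z_\mu z_\nu$ normalization) handled correctly.
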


\subsection{Quantum Hurwitz numbers}
\label{subsec:quantumhurwitz}

By  {\it simple quantum Hurwitz numbers} \cite{GH2, H2}  we mean the special case  of weighted Hurwitz numbers  obtained by choosing
 the parameters $c_i$ as
\be
c_i = q^i, \quad i=1, 2 \dots 
\label{c_iq_i_prime}
\ee
where $q$ is a real parameter between $0$ and $1$.

The corresponding weight generating function is 
\be
G(z) = E'(q,z)  \deq \prod_{i=1}^\infty (1+ q^i z) = (-zq; q)_\infty
\label{E_prime_qz__def}
\ee
where
\be
(z;q)_k := \prod_{j=0}^{k-1}((1 - z q^j), \quad (z; q)_\infty := \prod_{j=0}^{\infty}(1 - z q^j)
\ee
is the quantum Pochhammer symbol. 

Making the substitutions (\ref{c_iq_i_prime}), the weights entering in (\ref{Hd_G}) become
\bea
\label{eq:WePrime}
W_{E'(q)} (\mu^{(1)}, \dots, \mu^{(k)}) &\& := m_\lambda (q, q^2, \dots )\cr
&\& =  {1\over  |\aut(\lambda)|} \sum_{\sigma\in S_k} \frac{1}{
(q^{-\ell^*(\mu^{(\sigma(1))})} -1) \cdots (q^{-\ell^*(\mu^{(\sigma(1))})} \cdots q^{-\ell^*(\mu^{(\sigma(k))})}-1)}, \cr
&\&
\label{W_Eprime_q}
\eea
The (unnormalized) weighted Hurwitz numbers  therefore become
  \be
H^d_{E'(q)}(\mu, \nu) := \sum_{k=0}^\infty \sideset{}{'}\sum_{\substack{\mu^{(1)}, \dots \mu^{(k)} \\ \sum_{i=1}^k \ell^*(\mu^{(i)})= d}}
W_{E'(q)} (\mu^{(1)}, \dots, \mu^{(k)}) H(\mu^{(1)}, \dots, \mu^{(k)}, \mu, \nu). 
\label{Hd_E_prime_q}
\ee

\br
The parameter $q$ may be interpreted as $q = e^{-\epsilon}$ for a small parameter 
\be
\epsilon = { E_0 \over k_B T}, \quad T= \text{temperature}, \quad k_B =\text{Boltzman constant},
\ee
where 
\be
E_0 = \hbar \omega_0
\ee
is  interpreted as the ground state energy, and the higher levels are integer multiples
proportional to the colength of the partition representing the ramification type of a branch point
\be
E(\mu) = \ell^*(\mu) E_0.
\ee
Then (\ref{W_Eprime_q}) may be interpreted in terms of the energy distribution of a quantum bose gas with vanishing fugacity
\be
n_E = {1 \over e^{ E\over k_B T} -1},
\ee
assuming that the energy for $k$ branch points is the sum of that for each
\be
E(\mu^{(1)}, \dots, \mu^{(k)}) = \sum_{i=1}^k \ell^*(\mu^{(i)})\hbar \omega_0.
\ee
\er

Choosing $G= E'(q)$, in eqs.~(\ref{tau_G}). (\ref{tau_H_G}), we obtain
\bea
   \tau^{(G, \beta)} ({\bf t}, {\bf s})  &\&= \sum_{\lambda} \ r^{(E'(q), \beta)}_\lambda(z)s_\lambda ({\bf t}) s_\lambda ({\bf s}) \cr
   &\& =\sum_{d=0}^\infty \beta^d \sum_{\substack{\mu, \nu \\ |\mu|=|\nu|}} H^d_{E'(q)}(\mu, \nu) p_\mu({\bf t}) p_\nu({\bf s}).
   \label{tau_Eprimeq}
\eea
as the generating function for  simple quantum Hurwitz numbers.


\section{Quantum Hurwitz numbers and probability measure on $\mP_d$}

The quantum weighted Hurwitz numbers can be interpreted in terms of probability measures on the set of integer partitions. We summarize the pertinent facts here and refer to Section 2 of \cite{HOrt} for further details.

For $k\in\set{1, \dots , d}$ consider the sets
\begin{align}
		\label{eq:defM}
		\fM_{d,k}^{(n)} & = \set{\rb{\mu^{(1)}, \ldots, \mu^{(k)}} \in \rb{\mP_n}^k \colon \ell^*\rb{\mu^{(j)}}\ne 0 \ \forall j,\,  \sum_{j=1}^k \ell^\ast \rb{\mu^{(j)}} =d } \quad\text{and}\quad
		\fM_d^{(n)}&=\coprod_{k=1}^{d} \fM^{(n)}_{d,k}.
\end{align}
Define a measure $\Pmp$ on $\fM^{(n)}_{d}$ by
\begin{align}
	\label{eq:defTheta}
	\Pmp \rb{\mu^{(1)}, \ldots, \mu^{(k)}}  & = \frac1{\pfp} W_{E'(q)}\rb{ \mu^{(1)},\ldots,\mu^{(k)} },
	\intertext{where the \emph{partition function} $\pfp$ is defined so that $\Pmp$ is a probability measure; that is,}
	\label{eq:defZt}
	\pfp &= \sum_{k=1}^d \sum_{\fM^{(n)}_{d,k}} W_{E'(q)}\rb{\mu^{(1)},\ldots,\mu^{(k)} }.
	\intertext{The fact that $\pfp$ does not depend on $n$, provided $n\geq 2d$, is a consequence of the results in Section 2 of \cite{HOrt}. We then have the expectation value}
	\label{eq:wHexp}
	\ip{H\rb{\cdot,\ldots,\cdot,\mu,\nu}}_{\Pmp} & = \frac1{\pfp} H_{E'(q)}^d(\mu,\nu),
\end{align}
where $\ip{\cdot}_{\Pmp}$ denotes integration with respect to the measure $\Pmp$.

\begin{definition}
	For $n,d\in\Zbb_{>0}$ define the function $\lambdamap\colon \fM_d^{(n)}\longrightarrow \mP_d$ as follows:
\begin{align}
	\lambdamap & \colon \rb{\mu^{(1)},\ldots,\mu^{(k)}}\longmapsto \lambda
	\intertext{where $\lambda$ is the unique partition of $d$ such that}
	\set{\lambda_1,\ldots,\lambda_k} & = \set{ \ell^\ast\rb{\mu^{1} },\ldots, \ell^\ast\rb{\mu^{(k)} }  }.
\end{align}
\end{definition}

Assume from now on that $n\geq 2d$. By Lemma 2.1 of \cite{HOrt} the push-forward $\pmp$ on $\mP_d$ of $\Pmp$ under $\lambdamap$ does not depend on $n$ and is given by
\begin{align}
	\pmp(\lambda) &= \frac1{\pfp}\, p\rb{\lambda} w_{E'(q)}(\lambda)  \quad\quad\forall\, \lambda\in\mP_d,
	\intertext{where for any $\lambda\in\mP_d$,}
	w_{E'(q)}(\lambda) & = \frac1{\abs{\aut(\lambda)}}\, \sum_{\sigma\in S_{\ell(\lambda)}} \prod_{j=1}^{\ell(\lambda)} \rb{q^{\sum_{i=1}^j \lambda_{\sigma(i)} }}^{-1}
\end{align}
and further
\begin{align}
	p(\lambda) & := \prod_{j=1}^{\ell(\lambda)} p(\lambda_j).
\end{align}


\section{Zero-temperature limit and asymptotic expansion}

\label{sec:scl}

Recalling that the parameter $q$ is interpreted as 
\be
	q  = e^{-{ E_0 \over k_B T}}
\ee
for some ground state energy $E_0=\hbar\omega_0$, the zero temperature limit $T\longrightarrow 0^+$ corresponds to $q\longrightarrow 0^+$. In this section we state our asymptotic results in this limit. All proofs are given in the following section. 
We further assume throughout that $d\geq 2$ and $n\geq 2d$.


\subsection{Zero-temperature limit: leading term}
 \begin{definition}
	The \emph{Dirac measure} $\delta_x$ at $x\in S$ on a measurable space $(S,\Sigma)$ is defined by
	\begin{align}
		\label{eq:defDirac}
		\delta_x(A) & = \begin{cases}
			1\quad&\text{if } x\in A\\
			0& \text{otherwise}
		\end{cases}
	\end{align}
	for all $A\in\Sigma$.
 \end{definition}
 We begin by stating the leading order zero temperature limit.

\begin{theorem}
	\label{thm:Downstairs}
	Let $d\in\Zbb_{>0}$. As $q\longrightarrow 0^+$, the  $1$-parameter family of measures $\rb{\pmp}$ on $\mP_d$ converges weakly to the Dirac measure $\delta_{(d)}$ at $(d)\in\mP_d$.
\end{theorem}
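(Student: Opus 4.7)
The plan is to exploit the finiteness of $\mP_d$: weak convergence of probability measures on a finite set reduces to pointwise convergence of the mass functions, so it suffices to establish $\pmp((d))\longrightarrow 1$ as $q\longrightarrow 0^+$; the remaining masses then vanish automatically. I would deduce this by extracting the leading $q$-asymptotics of each weight $w_{E'(q)}(\lambda)$ and showing that $\lambda=(d)$ uniquely produces the smallest power of $q$.

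First I would rewrite each factor of the weight (translating to partition form via $(q^{-s}-1)^{-1}=q^s/(1-q^s)$ as in \eqref{W_Eprime_q}) and collect all $q$-powers, obtaining
\begin{equation*}
w_{E'(q)}(\lambda)\;=\;\frac{1}{|\aut(\lambda)|}\sum_{\sigma\in S_{\ell(\lambda)}} \frac{q^{E(\lambda,\sigma)}}{\prod_{j=1}^{\ell(\lambda)}\bigl(1-q^{s_j(\lambda,\sigma)}\bigr)},
\end{equation*}
where $s_j(\lambda,\sigma):=\sum_{i=1}^j\lambda_{\sigma(i)}$ and $E(\lambda,\sigma):=\sum_{j=1}^{\ell(\lambda)} s_j(\lambda,\sigma)=\sum_{i=1}^{\ell(\lambda)}(\ell(\lambda)-i+1)\lambda_{\sigma(i)}$. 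Since every $s_j\ge 1$, each factor $\bigl(1-q^{s_j}\bigr)^{-1}=1+O(q)$, so the leading behaviour as $q\to 0^+$ is governed by $E_{\min}(\lambda):=\min_\sigma E(\lambda,\sigma)$.

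The key step is then to show that $E_{\min}(\lambda)\ge d$ for every $\lambda\in\mP_d$, with equality \emph{only} at $\lambda=(d)$. The coefficients $\ell(\lambda)-i+1$ are strictly decreasing in $i$, so by the rearrangement inequality the minimum is attained by pairing the largest coefficient with the smallest part, i.e.\ $E_{\min}(\lambda)=\sum_{i=1}^{\ell(\lambda)}(\ell(\lambda)-i+1)\,\lambda'_i$, where $\lambda'_1\le\cdots\le\lambda'_{\ell(\lambda)}$ are the parts of $\lambda$ in increasing order. Subtracting $d=\sum_i\lambda'_i$ yields
\begin{equation*}
E_{\min}(\lambda)-d\;=\;\sum_{i=1}^{\ell(\lambda)-1}\bigl(\ell(\lambda)-i\bigr)\,\lambda'_i\;\ge\;\binom{\ell(\lambda)}{2},
\end{equation*}
which vanishes iff $\ell(\lambda)=1$, that is, iff $\lambda=(d)$.

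To finish, I would combine $w_{E'(q)}((d))\sim q^d$ with $w_{E'(q)}(\lambda)/w_{E'(q)}((d))=O\!\bigl(q^{\binom{\ell(\lambda)}{2}}\bigr)\longrightarrow 0$ for $\lambda\ne(d)$. Dividing the numerator and denominator of $\pmp((d))=p((d))\,w_{E'(q)}((d))/\pfp$ by $w_{E'(q)}((d))$ gives
\begin{equation*}
\pmp((d))\;=\;\frac{p((d))}{p((d))+\displaystyle\sum_{\lambda\ne(d)} p(\lambda)\,\frac{w_{E'(q)}(\lambda)}{w_{E'(q)}((d))}}\;\longrightarrow\;1,
\end{equation*}
using that $p(\lambda)$ is $q$-independent and $p((d))\ne 0$. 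Since $\mP_d$ is finite, this also forces $\pmp(\lambda)\to 0$ for every $\lambda\ne(d)$, which is exactly weak convergence to $\delta_{(d)}$. The main obstacle is the minimization step: justifying the rearrangement-inequality calculation and confirming uniqueness of the minimizer at $(d)$; everything else is routine $q\to 0^+$ expansion and normalization.
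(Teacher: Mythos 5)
Your proof is correct and follows essentially the same route as the paper's: both extract the leading power $q^{\sum_j j\lambda_j}$ of each weight from the expansion $(1-q^{s})^{-1}=1+O(q)$ together with a rearrangement argument over $\sigma$, and then identify $(d)$ as the unique minimizer of that exponent over $\mP_d$. You additionally spell out the quantitative gap $E_{\min}(\lambda)-d\ge\binom{\ell(\lambda)}{2}$ and the final normalization step $\pmp((d))\to 1$, which the paper leaves implicit.
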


\ 

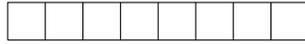
\begin{figure}[ht]

	\begin{center}
	\begin{tikzpicture}[>=latex,scale=0.5]

		\draw (0,0)--(1,0)--(2,0)--(3,0)--(4,0) -- (5,0) -- (6,0) --(7,0) --(8,0) -- (8,1)-- (7,1)-- (6,1) -- (5,1)-- (4,1) -- (3,1) --(2,1) -- (1,1) -- (0,1) -- (0,0) ;
		
		\draw (1,0) --(1,1);
		\draw (2,0) -- (2,1);
		\draw (3,0) -- (3,1);
		\draw (4,0) -- (4,1);
		\draw (5,0) -- (5,1);
		\draw (6,0) -- (6,1);
		\draw (7,0) -- (7,1);
		\draw (8,0) -- (8,1);
			
	\end{tikzpicture}
	\end{center}
	\footnotesize{\caption{The partition $(d)$ on which $\pmp$ concentrates asymptotically}}
\label{fig:specialPart}
\end{figure}

\

\par\noindent By the discussion in Section 2 this translates to a convergence result on $\fM^{(n)}_d$:

\begin{corollary}
	\label{cor:Upstairs}
	 If $n\geq 2d$ then the measure $\Pmp$ on $\fM_d^{(n)}$ converges weakly, as $q\longrightarrow 0^+$, to the uniform measure $\nu$ on $\fM^{(n)}_{d,1}$, the set of single partitions $\mu^{(1)}$ of $n$ with colength $d$. That is,
	 \begin{align}
	 	\nu(A) & = \frac{\abs{A\cap\fM^{(n)}_{d,1} }}{\abs{\fM^{(n)}_{d,1}}}.
	 \end{align}
\end{corollary}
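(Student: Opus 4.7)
The plan is to derive the corollary from Theorem~\ref{thm:Downstairs} by combining two elementary observations: a description of the fibre $\lambdamap^{-1}((d))$, and the fact that $W_{E'(q)}$ is constant on that fibre. The work is essentially downstream of Theorem~\ref{thm:Downstairs}; no new asymptotic analysis is needed.

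First I would verify that $\lambdamap^{-1}((d)) = \fM^{(n)}_{d,1}$. A tuple $(\mu^{(1)},\ldots,\mu^{(k)})\in\fM^{(n)}_d$ is sent to the one-part partition $(d)\in\mP_d$ precisely when the multiset $\{\ell^*(\mu^{(j)})\}_{j=1}^k$ has a single nonzero element equal to $d$; by the definition of $\fM^{(n)}_{d,k}$ (in which each $\ell^*(\mu^{(j)})$ is required to be nonzero) this forces $k=1$ and $\ell^*(\mu^{(1)})=d$. Since $\pmp$ is the push-forward of $\Pmp$ under $\lambdamap$, Theorem~\ref{thm:Downstairs} gives
$$\Pmp\rb{\fM^{(n)}_{d,1}} \;=\; \pmp(\{(d)\}) \;\longrightarrow\; 1 \quad\text{as } q\to 0^+,$$
so the complement $\fM^{(n)}_d\setminus\fM^{(n)}_{d,1}$ carries vanishing probability in the limit.

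Next I would specialize the weight formula \eqref{W_Eprime_q} to $k=1$: only the identity permutation contributes to the sum over $S_1$, the automorphism factor $|\aut((d))|$ equals $1$, and the product of denominators reduces to the single factor $q^{-d}-1$. Hence
$$W_{E'(q)}\rb{\mu^{(1)}} \;=\; \frac{1}{q^{-d}-1} \qquad\text{for every } (\mu^{(1)})\in \fM^{(n)}_{d,1}.$$
Because this value is independent of the particular partition $\mu^{(1)}$, the conditional measure $\Pmp(\cdot\mid \fM^{(n)}_{d,1})$ coincides exactly with the uniform measure $\nu$ for every $q\in(0,1)$, not merely in the limit.

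To conclude, I would split $\Pmp(A) = \Pmp(A\cap \fM^{(n)}_{d,1}) + \Pmp(A\setminus \fM^{(n)}_{d,1})$ for an arbitrary $A\subset\fM^{(n)}_d$. The first term equals $\Pmp(\fM^{(n)}_{d,1})\cdot\nu(A)$, which tends to $\nu(A)$ by the first step; the second is bounded above by $1-\Pmp(\fM^{(n)}_{d,1})$, which tends to $0$. Since $\fM^{(n)}_d$ is a finite set, this pointwise convergence of probabilities is precisely weak convergence to $\nu$. There is no genuine obstacle here beyond Theorem~\ref{thm:Downstairs} itself; the only thing to notice is that the $k=1$ weight depends on $\mu^{(1)}$ solely through its colength, which renders the conditional distribution exactly uniform for every $q$.
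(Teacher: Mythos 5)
Your proposal is correct and follows exactly the route the paper intends (the paper leaves this corollary's proof implicit, deferring to ``the discussion in Section 2''): you identify the fibre $\lambdamap^{-1}\rb{(d)}=\fM^{(n)}_{d,1}$, note that $W_{E'(q)}$ is constant on it since the $k=1$ weight $\frac{q^d}{1-q^d}$ depends only on the colength, and transfer Theorem \ref{thm:Downstairs} through the push-forward. Your write-up correctly supplies the two facts the paper takes for granted, so there is nothing to add.
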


\begin{remark}
	Note that the remaining surfaces, with exactly three branch points, with profiles $(\mu^{(1)}, \mu, \nu)$,
 correspond to what is often referred to as  Belyi curves \cite{AC1,KZ, Z}.
\end{remark}



\subsection{Higher-order corrections}

We now consider higher order terms for the partition function and the quantum Hurwitz numbers.
The follwing gives the two leading terms in weighted sum of functions on $\fM_d^{(n)}$ with weights $W_{E'(q)}$ on $\mP_d$.

\begin{theorem}
	\label{thm:FullExpansion}
	For any function $g\colon\fM_d^{(n)}\longrightarrow\rr$ we have
	\begin{align}
		\sum_{k=1}^d \sum_{(\mu^{(1)},\ldots,\mu^{(k)})\in\fM_{d,k}^{(n)}}& g\rb{ \mu^{(1)}, \ldots, \mu^{(k)} } W_{E'(q)}\rb{ \mu^{(1)},\ldots,\mu^{(k)} }  \\
		 &= q^d \sum_{\ell^*(\mu^{(1)})=d} g\rb{\mu^{(1)}}  + q^{d+1} \sum_{\substack{\ell^*(\mu^{(1)})=d-1\\ \ell^*(\mu^{(2)})=1}} g\rb{\mu^{(1)},\mu^{(2)}}  + O\rb{q^{d+2}}.
	\end{align}
\end{theorem}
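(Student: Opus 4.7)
The plan is to expand each weight $W_{E'(q)}$ as a power series in $q$ and identify precisely which $k$-tuples contribute at orders $q^d$ and $q^{d+1}$. Applying the elementary identity
\[
\frac{1}{q^{-s} - 1} \;=\; \frac{q^s}{1 - q^s} \;=\; q^s + q^{2s} + q^{3s} + \cdots
\]
to each factor in the definition of $W_{E'(q)}$, and writing $a_i := \ell^*(\mu^{(i)}) \geq 1$ together with $s_j(\sigma) := a_{\sigma(1)} + \cdots + a_{\sigma(j)}$ for $\sigma \in S_k$, one obtains
\[
W_{E'(q)}(\mu^{(1)}, \ldots, \mu^{(k)}) \;=\; \frac{1}{|\aut(\lambda)|} \sum_{\sigma \in S_k} \prod_{j=1}^k \frac{q^{s_j(\sigma)}}{1 - q^{s_j(\sigma)}},
\]
whose leading order in $q$ is $q^{\min_\sigma \sum_j s_j(\sigma)}$.

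\textbf{Key estimate.} Rewriting $\sum_{j=1}^k s_j(\sigma) = \sum_{i=1}^k (k-i+1)\, a_{\sigma(i)}$ and invoking the rearrangement inequality, the minimum over $\sigma \in S_k$ equals $\sum_{i=1}^k (k-i+1)\, a_{(i)}$, where $a_{(1)} \leq \cdots \leq a_{(k)}$ is the increasing rearrangement of the colengths. Combined with $\sum_i a_{(i)} = d$ and $a_{(i)} \geq 1$, this gives
\[
\min_\sigma \sum_{j=1}^k s_j(\sigma) \;=\; d \,+\, \sum_{i=1}^{k-1} (k-i)\, a_{(i)} \;\geq\; d + \binom{k}{2}.
\]
For $k \geq 3$ this lower bound is $d + 3$, and since $\fM_d^{(n)}$ is finite the combined contribution from such $k$ to the weighted sum is $O(q^{d+3})$, safely inside the claimed error.

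\textbf{Leading cases and conclusion.} For $k = 1$ the unique term gives $W_{E'(q)}(\mu^{(1)}) = q^d/(1 - q^d) = q^d + O(q^{2d}) = q^d + O(q^{d+2})$ since $d \geq 2$, producing the stated coefficient of $q^d$. For $k = 2$ the leading exponent is $d + a_{(1)}$, so a $q^{d+1}$ contribution arises precisely when $a_{(1)} = 1$, i.e.\ when $\{a_1,a_2\} = \{1, d-1\}$; expanding the two summands in the $\sigma$-sum in that case yields $W_{E'(q)}(\mu^{(1)},\mu^{(2)}) = q^{d+1} + O(q^{d+2})$, while the remaining $k=2$ tuples (with both $a_i \geq 2$) contribute $O(q^{d+2})$. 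Summing these three regimes against $g$ collects the two explicit terms in the claimed expansion.

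\textbf{Main obstacle.} The only delicate point is keeping the $|\aut(\lambda)|$ normalization consistent with the symmetrization over $\sigma \in S_k$, especially in the degenerate case $d=2$, where $\lambda = (1,1)$ has $|\aut(\lambda)| = 2$ and both permutations in $S_2$ yield identical summands, and in deciding whether the $k=2$ contribution at order $q^{d+1}$ is to be written as an ordered sum, an unordered sum, or with a symmetrized version of $g$. Once that normalization is tracked carefully, the rest of the argument reduces to a routine termwise comparison of geometric series.
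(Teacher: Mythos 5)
Your proof is correct and essentially the same as the paper's: both expand each factor of $W_{E'(q)}$ as a geometric series, locate the minimal exponent $\sum_j j\lambda_j = d+\sum_{j\ge 2}(j-1)\lambda_j$ by the rearrangement inequality (your explicit bound $d+\binom{k}{2}$ for $k\ge 3$ just spells out what the paper reads off from its first-order formula \eqref{eq:FirstOrderW}), and then handle the two surviving colength partitions $(d)$ and $(d-1,1)$ by direct computation, the only organizational difference being that the paper pushes forward to $\mP_d$ first and lifts back at the end. The normalization issue you flag is genuine but is settled by the convention inherited from Lemma 2.1 of \cite{HOrt}, under which the fibre of the colength map over $\lambda$ is counted as having $p(\lambda)=\prod_j p(\lambda_j)$ elements (i.e.\ the constituent profiles are effectively unordered), which is exactly what makes the $q^{d+1}$ coefficient a single sum over $(\ell^*(\mu^{(1)}),\ell^*(\mu^{(2)}))=(d-1,1)$ rather than a sum over both orderings.
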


\par\noindent In particular,  we obtain the following leading terms in the $T=0$ expansion of
 the partition function. (Recall that $p(d)$ denotes the number of integer partitions of $d$.)

\begin{corollary}
	As $q\rightarrow 0^+$,
	\label{cor:PartitionFunction}
	\begin{align}
		\pfp & = p(d) q^d + p(d-1) q^{d+1} + O\rb{q^{d+2}}.
	\end{align}
\end{corollary}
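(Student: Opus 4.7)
The plan is to derive the corollary as a direct specialization of Theorem~\ref{thm:FullExpansion} by taking the constant function $g\equiv 1$ and then carrying out two elementary counting arguments.

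First, setting $g\equiv 1$ in Theorem~\ref{thm:FullExpansion}, the left-hand side is precisely $\pfp$ by its definition in equation~(\ref{eq:defZt}). The right-hand side reduces to
\begin{align*}
q^d \,\bigl|\{\mu^{(1)}\in\mP_n : \ell^*(\mu^{(1)})=d\}\bigr| + q^{d+1}\,\bigl|\{(\mu^{(1)},\mu^{(2)})\in(\mP_n)^2 : \ell^*(\mu^{(1)})=d-1,\ \ell^*(\mu^{(2)})=1\}\bigr| + O(q^{d+2}).
\end{align*}
It therefore suffices to verify that the two cardinalities above are $p(d)$ and $p(d-1)$ respectively.

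Next, I would establish the bijection $\{\mu\in\mP_n : \ell^*(\mu)=d\}\longleftrightarrow \mP_d$. Given such $\mu$, its length is $\ell(\mu)=n-d$, so the sequence of parts shifted by one, $(\mu_i-1)_{i=1}^{n-d}$, is weakly decreasing, non-negative, and sums to $d$; discarding the zero entries yields a partition $\lambda\in\mP_d$. Conversely, any $\lambda\in\mP_d$ lifts to such a $\mu$ by adding $1$ to each of its parts and padding with $(n-d)-\ell(\lambda)$ additional parts equal to $1$. The hypothesis $n\geq 2d$ guarantees that $n-d\geq d \geq \ell(\lambda)$, so the padding is always possible; this gives the first count $p(d)$.

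Finally, for the second term, the same bijection shows that there are $p(d-1)$ choices of $\mu^{(1)}\in\mP_n$ with $\ell^*(\mu^{(1)})=d-1$, and the condition $\ell^*(\mu^{(2)})=1$ forces $\mu^{(2)}=(2,1^{n-2})$, which is the unique partition of $n$ of colength one. Hence the pair count is $p(d-1)\cdot 1 = p(d-1)$, and the claimed expansion follows. There is no real obstacle here: once Theorem~\ref{thm:FullExpansion} is in hand, the corollary is a matter of bookkeeping, the only subtle point being the invocation of $n\geq 2d$ to validate the colength/partition bijection.
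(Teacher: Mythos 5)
Your proposal is correct and follows the paper's own route exactly: the paper also obtains the corollary by setting $g\equiv 1$ in Theorem~\ref{thm:FullExpansion}, with the counts $p(d)$ and $p(d-1)$ coming from the fact that the number of partitions of $n$ with colength $m$ equals $p(m)$ when $n\geq 2d$. You merely make explicit the colength/partition bijection (and the uniqueness of the colength-one partition $(2,1^{n-2})$) that the paper absorbs into the factor $p(\lambda)$ of the push-forward measure from Section~2.
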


For the zero temperature expansions of simple quantum Hurwitz numbers, we have the following leading terms

\begin{corollary}
	\label{cor:QuantumWeighted}
	For any $\mu,\nu\in\mP_n$ we have, as $q\rightarrow 0^+$,
	\begin{align}
		H^d_{E'(q)}(\mu,\nu)  = q^{d}\sum_{\substack{ \mu^{(1)} \in\fM_{d,k}^{(n)} \\  \ell^*(\mu^{(1)})=d}}
 H\rb{\mu^{(1)},\mu,\nu}+ q^{d+1}\sum_{\substack{ \mu^{(1)}, \,\mu^{(2)} \in\fM_{d,k}^{(n)} \\  \ell^*(\mu^{(1)})+  \ell^*(\mu^{(2)})=d}}  H\rb{\mu^{(1)},\mu^{(2)},\mu,\nu}+O\rb{q^{d+2}}.
	\end{align}
	\end{corollary}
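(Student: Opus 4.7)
The plan is to obtain this corollary as a direct specialization of Theorem~\ref{thm:FullExpansion}. Fix ramification profiles $\mu,\nu \in \mP_n$ and introduce the test function
\begin{equation*}
g \colon \fM_d^{(n)} \longrightarrow \rr, \qquad g(\mu^{(1)},\ldots,\mu^{(k)}) := H(\mu^{(1)},\ldots,\mu^{(k)},\mu,\nu),
\end{equation*}
treating $(\mu,\nu)$ as fixed parameters.

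The first step I would carry out is a bookkeeping identification: the left-hand side of Theorem~\ref{thm:FullExpansion} evaluated at this $g$ is exactly $H^d_{E'(q)}(\mu,\nu)$ as given by equation~(\ref{Hd_E_prime_q}). The outer sum over $k$ is identical in both expressions, and within each $k$ the primed summation in~(\ref{Hd_E_prime_q}) excludes precisely those $k$-tuples that contain a component of identity cycle type, i.e.\ with $\ell^\ast(\mu^{(i)})=0$; this matches exactly the restriction built into the definition~(\ref{eq:defM}) of $\fM_{d,k}^{(n)}$. The weight function $W_{E'(q)}$ is the same on both sides.

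Once this identification is in place, applying Theorem~\ref{thm:FullExpansion} to this $g$ yields
\begin{align*}
H^d_{E'(q)}(\mu,\nu) &= q^d \sum_{\ell^\ast(\mu^{(1)})=d} H(\mu^{(1)},\mu,\nu) \\
&\qquad + q^{d+1}\sum_{\substack{\ell^\ast(\mu^{(1)})=d-1 \\ \ell^\ast(\mu^{(2)})=1}} H(\mu^{(1)},\mu^{(2)},\mu,\nu) + O(q^{d+2}),
\end{align*}
which is the asserted expansion. In particular, the condition $\ell^\ast(\mu^{(1)})+\ell^\ast(\mu^{(2)})=d$ in the stated corollary must be read as restricting to the colength pair $(d-1,1)$: any pair with $\min(\ell^\ast(\mu^{(1)}),\ell^\ast(\mu^{(2)}))\geq 2$ has $W_{E'(q)}$ of order $q^{d+2}$ or smaller and is absorbed into the remainder.

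There is essentially no obstacle here, since the corollary is nothing more than Theorem~\ref{thm:FullExpansion} specialised to the affine family of test functions arising from the Hurwitz numbers with two of the profiles held fixed. The only points that require care are the two bookkeeping checks above: matching the primed-sum convention of~(\ref{Hd_E_prime_q}) with the summation domain $\fM_d^{(n)}$ of Theorem~\ref{thm:FullExpansion}, and correctly identifying the ordered pairs that actually contribute at order $q^{d+1}$.
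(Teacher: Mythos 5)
Your proof is correct and takes essentially the same approach as the paper, which likewise obtains the corollary by substituting $g(\mu^{(1)},\ldots,\mu^{(k)})=H(\mu^{(1)},\ldots,\mu^{(k)},\mu,\nu)$ into Theorem~\ref{thm:FullExpansion}. Your additional observation that the second sum in the stated corollary must be read as restricted to the colength pair $(d-1,1)$ (rather than all pairs with colengths summing to $d$) is a correct and worthwhile clarification of what appears to be a slip in the printed statement.
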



\section{Proofs}
\label{sec:proofs}

In this section we prove the results stated in Section \ref{sec:scl}.

\begin{proof}[Proof of Theorem \ref{thm:Downstairs}]
	For any $\lambda\in\mP_d$,
	\begin{align}
		w_{E'(q)}(\lambda) & = \frac1{\aut(\lambda)} \sum_{\sigma \in S_{\ell(\lambda)}} \prod_{j=1}^{\ell(\lambda)} \frac{q^{\sum_{i=1}^j \lambda_{\sigma(i)} }}{ 1-q^{\sum_{i=1}^j \lambda_{\sigma(i)} } }\\
		& = \frac1{\aut(\lambda)} \sum_{\sigma \in S_{\ell(\lambda)}} \prod_{j=1}^{\ell(\lambda)} q^{\sum_{i=1}^j \lambda_{\sigma(i)} } \rb{1+O\rb{q}}\\
		& = \frac1{\aut(\lambda)} \sum_{\sigma \in S_{\ell(\lambda)}}  q^{\sum_{i=1}^{\ell(\lambda)} (\ell(\lambda)-i+1)\lambda_{\sigma(i)} } \rb{1+O\rb{q}}\\
		& = \frac1{\aut(\lambda)} \sum_{\sigma \in S_{\ell(\lambda)}}  q^{\sum_{j=1}^{\ell(\lambda)} j\lambda_{\sigma(j)} } \rb{1+O\rb{q}}
	\end{align}
	Since $\lambda_1\geq \lambda_2\geq\ldots\geq \lambda_{\ell(\lambda)}$ and $q$ is small the  sum above is dominated by the contribution when $\sigma$ is the identity permutation. In particular we obtain
	\begin{align}
		\label{eq:FirstOrderW}
		w_{E'(q)}(\lambda) & = \frac1{\aut(\lambda)}\,  q^{\sum_{j=1}^{\ell(\lambda)} j\lambda_{j} } \rb{1+O\rb{q}}.
	\end{align}
	Thus, in the limit as $q\to 0$ the dominant weight will be given by $\lambda$ such that $\sum_j j\lambda_j$ is minimised, i.e. $\lambda=(d)$. This completes the proof.
\end{proof}

\par\noindent Having established the first-order result we now turn to the higher order corrections. It follows from \eqref{eq:FirstOrderW} that
\begin{align}
	\sum_{\substack{\lambda\ne (d)\\\lambda\ne (d-1,1)}} w_{E'(q)}(\lambda ) & = O\rb{q^{d+2}}.
\end{align}
Further,
\begin{align}
	w_{E'(q)}((d)) & = \frac1{\aut((d))}\ \frac{q^d}{1-q^d} = q^d+O\rb{q^{2d}},\\
	w_{E'(q)}((d-1,1)) &= \frac{q^{d-1}}{1-q^d}\,\frac{q^d}{1-q^d} + \frac q{1-q}\, \frac{q^d}{1-q^d}\\
	& = \frac{q^{d+1}}{(1-q)(1-q^d)} + O\rb{q^{2d-1}}\\
	& = q^{d+1}\rb{1+q+O\rb{q^2}}\rb{1+q^d+O\rb{q^{2d}}} \\
	&= q^{d+1} +  q^{d+2} +O\rb{q^{d+3}}.
\end{align}
For any $f\colon\mP_d\longrightarrow\rr$ we therefore have
\begin{align}
	\label{eq:FullExpansionDownstairs}
		\sum_{\lambda\in\mP_d} f(\lambda) w_{E'(q)}(\lambda) & = f((d))\ p(d)  q^d + f((d-1),(1))p(d-1)  q^{d+1} + O\rb{q^{d+2}}.
\end{align}
Theorem \ref{thm:FullExpansion} now follows from \eqref{eq:FullExpansionDownstairs} and the discussion in Section 2. 

By choosing $g$ to be identically equal to 1 we obtain Corollary \ref{cor:PartitionFunction}, whereas choosing, for fixed $(\mu,\nu)$,
\begin{align}
	g(\mu^{(1)},\ldots,\mu^{k})=H(\mu^{(1)},\ldots,\mu^{k},\mu,\nu),
\end{align}   Corollary \ref{cor:QuantumWeighted} is just Theorem \ref{thm:FullExpansion} for this particular case.
\ \\ \ \\

\noindent 
\small{ {\it Acknowledgements.}  The work of JH was partially supported by the Natural Sciences and Engineering Research Council of Canada (NSERC) and the Fonds de recherche du Qu\'ebec, Nature et technologies (FRQNT). JO was partially supported by a CRM-ISM postdoctoral fellowship and a Concordia Horizon postdoctoral fellowship.}

\newcommand{\arxiv}[1]{\href{http://arxiv.org/abs/#1}{arXiv:{#1}}}

\bigskip
\noindent

\end{document}